\newcommand\Tstrut{\rule{0pt}{2.4ex}} 
\newcommand\Bstrut{\rule[-1.2ex]{0pt}{0pt}} 
\newcommand\bovermat[2]{%
  \makebox[0pt][l]{$\smash{\overbrace{\phantom{%
    \begin{matrix}#2\end{matrix}}}^{#1}}$}#2}
\renewcommand*\env@matrix[1][*\c@MaxMatrixCols c]{%
  \hskip -\arraycolsep
  \let\@ifnextchar\new@ifnextchar
  \array{#1}}
\title{The Distribution of Optimal Strategies in Symmetric Zero-sum Games}
\author{Florian Brandl\\
Technical University of Munich\\
\texttt{\small brandlfl@in.tum.de}
}
\date{}
\begin{document}

\maketitle

\begin{abstract}
	 Given a skew-symmetric matrix, the corresponding two-player symmetric zero-sum game is defined as follows: one player, the row player, chooses a row and the other player, the column player, chooses a column. The payoff of the row player is given by the corresponding matrix entry, the column player receives the negative of the row player. A randomized strategy is optimal if it guarantees an expected payoff of at least 0 for a player independently of the strategy of the other player.
	 We determine the probability that an optimal strategy randomizes over a given set of actions when the game is drawn from a distribution that satisfies certain regularity conditions. The regularity conditions are quite general and apply to a wide range of natural distributions.
\end{abstract}

 \noindent\textbf{JEL Classifications Code: }C62, C72

\section{Introduction}

A (two-player) zero-sum game is played on a matrix where the row player chooses a row and the column player chooses a column. The payoff of the row player is given by the corresponding matrix entry, the column player receives the negative of the row player. Both players may randomize over their actions. Von Neumann's minimax theorem shows that every zero-sum game admits a value, i.e., the row player can guarantee an expected payoff for himself that is equal to the negative of the expected payoff that the column player can guarantee for himself. A strategy that maximizes the minimal guaranteed expected payoff of a player is a maximin strategy for this player. Pairs of maximin strategies correspond to Nash equilibria of the game. We will refer to maximin strategies as \emph{optimal strategies}.

A zero-sum game is \emph{symmetric} if the corresponding payoff matrix is skew-symmetric. Thus, both players have the same set of actions and every maximin strategy of the row player is also a maximin strategy of the column player and \emph{vice versa}. Moreover, both players can achieve a payoff of at least $0$ by playing the same strategy as the other player. This also implies that the value of a symmetric zero-sum game is $0$. Symmetric zero-sum games can be associated with weighted digraphs where the vertices correspond to actions and the weights of the edges are the payoffs from choosing the corresponding actions.

Symmetric zero-sum games appear in many areas of natural science such as biology, physics, and chemistry. We only give two examples here. In evolutionary biology, they can be used to model population dynamics among multiple species with actions corresponding to species and payoffs corresponding to the probabilities that an individual from one species ``beats'' an individual from another species; the probabilities in an optimal strategy specify the fractions of individuals from each species in a stable state. Hence, the support of an optimal strategy constitutes the set of species that survive in a stable state. In quantum physics, symmetric zero-sum games appear in bosonic systems where different quantum states take the role of actions and the transition probabilities from one state to another form the payoffs. \citet{KWKF15a} consider the support of optimal strategies in these games to determine which states become condensates. 

In this paper we show that, for every set of actions $S$, the probability that a symmetric zero-sum game admits an optimal strategy with support $S$ is $2^{-(n-1)}$ if $S$ has odd cardinality and $0$ otherwise, where $n$ is the total number of actions. In particular, this probability only depends on the parity of $S$. 
This stems from the fact that a skew-symmetric matrix of odd size cannot have full rank.\footnote{\label{note:1}A skew-symmetric matrix of odd size $G$ cannot have full rank, since $\det(G) = \det(G^T) = \det(-G) = (-1)^n\det(G) = -\det(G)$ and, hence, $\det(G) = 0$.}
For the proof of this result we assume that the distribution of games is symmetric and regular.
A distribution is \emph{symmetric} if it is invariant under negation of all payoffs when the row player chooses an action from a certain set and the column player chooses an action from the complement set. Intuitively, this condition prescribes that, in the graph representation of the game, reversing all edges between $S$ and its complement set does not change the probability of the game being chosen.
A distribution is \emph{regular} if a randomly chosen game almost surely admits a unique optimal strategy. 
We assume throughout that games are drawn from a symmetric, regular distribution. 

Related questions have been studied for various classes of games. \citet{Wils71a} showed that the number of Nash equilibria is finite and odd for almost all $n$-person normal form games. A different proof of the same statement was given by \citet{Hars73a}. \citet{McLe05a} derived a formula for the expected number of Nash equilibria in which players play a certain set of actions with positive probability in normal form games. His model assumes that the payoffs of all players are independent and distributed uniformly over the unit sphere. If games are distributed such that Nash equilibria are almost surely unique, the expected number of Nash equilibria with given support is equal to the probability that the game admits a Nash equilibrium with this support. Thus, our result can also be phrased as determining the expected number of Nash equilibria with given support.
Follow-up work by \citet{McBe05a} has derived a formula for the expected number of Nash equilibria of a random two-player normal form game. Similarly to \citet{McLe05a}, they assume that the payoffs of both players are drawn independently from a uniform distribution on the unit sphere.  

In zero-sum games every convex combination of Nash equilibria is again a Nash equilibrium. Hence there is either a unique Nash equilibrium or infinitely many. However, \citeauthor{Wils71a}'s theorem does not imply that Nash equilibria are almost surely unique in low dimensional subclasses of normal form games, e.g., zero-sum games, symmetric zero-sum games, or tournament games.\footnote{Tournament games are symmetric zero-sum games in which all off-diagonal payoffs are either $1$ or $-1$.} \citet{FiRy92a} showed that every tournament game admits a unique optimal strategy and, hence, a unique Nash equilibrium. This result was generalized by \citet{LLL97a} to symmetric zero-sum games where all payoffs are odd integers, and by \citet{LeBr05a} to symmetric zero-sum games where all payoffs satisfy a more general congruency condition. 
Closest to the present paper is the work of \citet{Robe04a}, who proves the same formula that is derived in this paper for random symmetric zero-sum games and a somewhat less general class of distributions. He assumes that the payoffs are drawn i.i.d. from a distribution that is symmetric about $0$, i.e., a distribution with even density function. Rather than making assumptions about the distributions of single payoffs, we only impose assumptions about the distribution of the entire game matrix. In particular, we do not require payoffs to be independent or identically distributed. 

For (not necessarily symmetric) zero-sum games, the situation is less clear. Experiments by \citet{FaMa87a} suggest that the size of the optimal strategy of a zero-sum game chosen uniformly at random approximately follows a binomial distribution that chooses half of the actions in expectation. \citet{Jona04a} showed that optimal strategies are almost surely unique if the payoffs are drawn from continuous i.i.d. random variables that are symmetric about $0$. Moreover, he proved that the expected fraction of actions in the support of an optimal strategy is close to $\nicefrac{1}{2}$ when the number of actions goes to infinity. \citet{Robe06a} considers zero-sum games where payoffs follow independent and identical Cauchy distributions. Remarkably, he derives a closed form formula for the probability that the pair of optimal strategies of a random game has a given support.  

The proof of our main result (\Cref{cor:main}) is divided into three statements. In \Cref{thm:totallymixed} we determine the probability that an optimal strategy puts positive probability on all actions, i.e., the probability that a game admits a totally mixed optimal strategy. \Cref{lem:subgame} establishes that the distribution of the subgame where both players are restricted to the same set of actions is symmetric and regular if the distribution of the full game is symmetric and regular. As a consequence of these two statements we get the probability that such a subgame admits a totally mixed optimal strategy. Lastly, in \Cref{thm:extend} we determine the probability that an optimal strategy of a subgame is optimal in the full game. The probability that a game admits an optimal strategy with given support may then be derived easily. In this sense the structure of the proof is very similar to \citeauthor{McLe05a}'s \citeyearpar{McLe05a} argument.

Finally, we will argue that symmetric, regular distributions occur naturally. For example, if the payoffs of the game follow independent normal distributions the distribution of games is symmetric and regular. More generally, we will show that every absolutely continuous distribution is regular (\Cref{thm:almostsurely}). As noted before, every tournament game admits a unique optimal strategy. Thus, the uniform distribution over all tournament games is symmetric and regular. As a consequence, \Cref{cor:main} implies a result of \citet{FiRe95a}, who determine the probability that the optimal strategy of a random tournament game uses $k$ actions with non-zero probability.

\section{Preliminaries}

A \emph{zero-sum game} $G$ is a matrix in $\mathbb{R}^{M\times N}$, where $M$ and $N$ are the sets of actions for the row and column player, respectively.
We will write $|N| = n$ for short.
The matrix entry $G_{ij}$ represents the payoff of the \emph{row} player if he chooses row $i$ and the column player chooses column $j$.
The set of all probability distributions over a finite set $S$ is denoted by $\Delta(S)$, i.e., $\Delta(S) = \{p\in\mathbb{R}^S\colon p\ge 0 \text{ and } \sum_{i \in S} p_i = 1\}$. 
A \emph{(randomized) strategy} for the row player or the column player is a probability distribution on $M$ or $N$, respectively. 
The \emph{support} $p_+$ of a strategy $p\in\Delta(N)$ is the set of actions to which $p$ assigns positive probability, i.e., $p_+ = \{i\in N\colon p_i > 0\}$. 
For vectors $v\in\mathbb{R}^N$, we additionally define $v_- = \{i\in N\colon v_i < 0\}$. 
A strategy $q^\ast$ is a \emph{maximin strategy} for the row player if it maximizes his minimum expected payoff, i.e., 
\[
\min_{p\in \Delta(N)} {q^\ast}^TGp\ge \max_{q\in\Delta(M)}\min_{p\in \Delta(N)} q^TGp\text.
\]
Maximin strategies for the column player are defined analogously.
By the minimax theorem the minimum expected payoff of the row player when he plays a maximin strategy is equal to the negative of the minimum expected payoff of the column player when he plays a maximin strategy.
This payoff is called the \emph{value} of the game.
The set of pairs $(q^\ast,p^\ast)$ such that $q^\ast$ and $p^\ast$ are maximin strategies of the row player and the column player, respectively, is the set of Nash equilibria of the game.
We say that a strategy is an \emph{optimal} strategy of a player if it is a maximin strategy.
Note that the set of optimal strategies of each player is convex, since they are the sets of solutions to linear programs.
\citet{Ragh94a} proved that every action of the row player that yields the same payoff as an optimal strategy against all optimal strategies of the column player is played with positive probability in some optimal strategy of the row player. This is known as the equalizer theorem.

	\begin{proposition}[\citealp{Ragh94a}]\label{thm:equalizer}
		Let $G$ be a game with value $v\in\mathbb{R}$ and $i\in M$. If $(Gp)_i = v$ for all optimal strategies $p$ of the column player, then there is an optimal strategy $q^\ast$ of the row player with $q^\ast_i > 0$. 
	\end{proposition}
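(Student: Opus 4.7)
The plan is to proceed by contradiction: assume that every optimal row strategy $q^\ast$ has $q^\ast_i = 0$, and produce an optimal column strategy $p$ with $(Gp)_i < v$, which contradicts the hypothesis. The vehicle is linear programming duality applied to the right auxiliary LP.

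The polytope of optimal row strategies is $Q^\ast = \{q : G^T q \geq v\mathbf{1},\ \mathbf{1}^T q = 1,\ q \geq 0\}$. Consider the auxiliary linear program $\max\, q_i$ subject to $q \in Q^\ast$. Under the contradictory hypothesis, its optimum equals $0$. Attaching multipliers $y_j \geq 0$ to the constraints $(G^T q)_j \geq v$ (one per $j \in N$) and a free multiplier $t$ to the equality $\mathbf{1}^T q = 1$, the dual reads $\min\, t - v\,\mathbf{1}^T y$ subject to $t\mathbf{1} - Gy \geq e_i$ and $y \geq 0$. Strong duality then yields a dual optimum $(y, t)$ of value $0$.

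From this dual solution I would extract the strategy $p$. If $y = 0$, the $i$th dual constraint forces $t \geq 1$, while the objective-value equation $t = v\,\mathbf{1}^T y$ forces $t = 0$, a contradiction. Hence $s := \mathbf{1}^T y > 0$ and $p := y/s$ lies in $\Delta(N)$. Dividing the dual feasibility constraints by $s$ and using $t/s = v$ yields $(Gp)_k \leq v$ for every $k \neq i$ and $(Gp)_i \leq v - 1/s < v$. Thus $p$ is a feasible (hence optimal) column strategy that violates $(Gp)_i = v$, the desired contradiction.

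The substantive step is spotting the correct auxiliary LP, namely the maximization of $q_i$ over $Q^\ast$; once that is written down, the remainder is routine LP bookkeeping. The main pitfall I expect is getting sign conventions right in the dualization, particularly the free variable $t$ coming from the equality $\mathbf{1}^T q = 1$ and the orientation of the inequality $G^T q \geq v\mathbf{1}$. A Farkas-style alternative, separating the infeasible system ``$Gp \leq v\mathbf{1}$, $(Gp)_i \leq v - \varepsilon$, $p \in \Delta(N)$'' for some $\varepsilon > 0$, would produce the same row-strategy certificate and only differs in packaging.
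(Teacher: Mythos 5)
Your argument is correct and complete. Note that the paper does not prove this proposition at all: it is imported as a known result with a citation to Raghavan (1994), so there is no in-paper proof to compare against. Your LP-duality derivation is sound as a self-contained substitute: the primal $\max\, q_i$ over $Q^\ast = \{q : G^T q \ge v\mathbf{1},\ \mathbf{1}^T q = 1,\ q \ge 0\}$ is feasible (by the minimax theorem) and bounded, so strong duality applies; the dual you write down is the correct one; the case $y=0$ is correctly excluded; and the normalized $y/s$ is indeed an optimal column strategy with $(Gp)_i \le v - 1/s < v$, contradicting the hypothesis.
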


	Following \citet{Hars73b}, an equilibrium $(q^\ast,p^\ast)$ is \emph{quasi-strict} if every action of the row player that is in the support of $q^\ast$ yields strictly more expected payoff against $p^\ast$ than every action that is not in the support of $q^\ast$ (and similarly for the column player).\footnote{\citeauthor{Hars73b} introduced the concept of quasi-strong equilibria, which however was referred to as quasi-strict equilibria in subsequent papers to avoid confusion with Aumann's notion of strong equilibria \citep[][]{Auma59a}.} It is a well-known fact that if a game only admits quasi-strict equilibria, then it in fact has a unique equilibrium. \Cref{lem:quasi-strict} shows that the converse is also true, i.e., if a game admits an equilibrium that is not quasi-strict, then it cannot be the unique equilibrium of the game.
	The proof of \Cref{lem:quasi-strict} makes use of the equalizer theorem.
	
	\begin{lemma}\label{lem:quasi-strict}
		Let $G$ be a game and $(q^\ast,p^\ast)$ an equilibrium of $G$. If $(q^\ast,p^\ast)$ is not quasi-strict, then $G$ has multiple equilibria. 
	\end{lemma}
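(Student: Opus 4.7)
The plan is to translate the failure of quasi-strictness into an action that achieves the value against the opponent's equilibrium strategy despite being outside the support, and then produce a new equilibrium by one of two mechanisms depending on how this action behaves against \emph{all} optimal strategies of the opponent. Because the roles of the two players are symmetric, I will carry out the argument assuming the quasi-strictness condition fails on the row side; the column side is identical after transposing. Thus I pick $i\in M\setminus q^\ast_+$ with $(Gp^\ast)_i = v$, where $v$ denotes the value of $G$ (recall that every action in $q^\ast_+$ must yield exactly $v$ against $p^\ast$, so the failure of quasi-strictness forces an outside action to match rather than strictly trail).

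Next I split on a dichotomy for this fixed $i$. In the first case, suppose $(Gp)_i = v$ for every optimal strategy $p$ of the column player. Then \Cref{thm:equalizer} yields an optimal row strategy $q^{\ast\ast}$ with $q^{\ast\ast}_i > 0$. Since $q^\ast_i = 0$, we have $q^{\ast\ast}\ne q^\ast$, and because the set of equilibria of a zero-sum game is the product of the two sets of optimal strategies, $(q^{\ast\ast},p^\ast)$ is a second equilibrium of $G$.

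In the second case there exists an optimal column strategy $p'$ with $(Gp')_i\ne v$. For any optimal $p$, every pure row response yields at most $v$, so in fact $(Gp')_i < v$. Since $(Gp^\ast)_i = v$, we conclude $p'\ne p^\ast$, and again by the product structure of equilibria $(q^\ast,p')$ is a second equilibrium of $G$.

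The only genuinely subtle step is the first case, which relies on the equalizer theorem to guarantee that an action which is forced to tie against every optimal column strategy must appear in the support of some optimal row strategy; everything else is bookkeeping with the definition of optimal strategies and the well-known fact that in a zero-sum game the equilibria form a rectangle. I do not foresee any further obstacle beyond making the case split on the right quantity, namely the payoff to row $i$ against \emph{all} optimal column strategies.
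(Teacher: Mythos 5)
Your proof is correct and rests on the same key tool as the paper's, namely the equalizer theorem (\Cref{thm:equalizer}); the only structural difference is that the paper argues by contradiction (so that uniqueness of the equilibrium makes the ``for all optimal column strategies'' hypothesis of the equalizer theorem automatic), whereas you argue directly and handle that hypothesis via your case split, with your second case directly exhibiting a second optimal column strategy. Both arguments are sound and essentially equivalent.
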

	
	\begin{proof}
		Let $v = (q^\ast)^TGp^\ast$ be the value of $G$. 
		Assume for contradiction that $(q^\ast,p^\ast)$ is the unique equilibrium of $G$ and $(q^\ast,p^\ast)$ is not quasi-strict, i.e., without loss of generality there is $i\not\in q^\ast_+$ such that $(Gp^\ast)_i = v$.
		Then it follows from \Cref{thm:equalizer} that the row player has an optimal strategy $\hat q^\ast$ with $\hat q^\ast_i > 0$. In particular, $\hat q^\ast\neq q^\ast$. Then $(\hat q^\ast,p^\ast)$ is another equilibrium of $G$. 
		This contradicts uniqueness of $(q^\ast,p^\ast)$.
	\end{proof}

\section{Symmetric Zero-sum Games}

Symmetric zero-sum games constitute a class of zero-sum games with particularly nice properties.
A zero-sum game is \emph{symmetric} if $G$ is skew-symmetric, i.e., $G = -G^T$. 
For brevity, we will simply use \emph{game} to refer to a symmetric zero-sum game for the remainder of the paper. 
The set of all games is denoted by $\mathcal{G}$. 
Symmetry implies that the sets of optimal strategies of both players coincide.
We will hence simply use the term optimal strategy without referring to a specific player.

For a set of actions $S\subseteq N$, a game $G\in\mathcal{G}$, and a vector $v\in\mathbb{R}^N$, we denote by $G_S = (G_{ij})_{i,j\in S}$ and $v_S = (v_i)_{i\in S}$ the sub-matrix and sub-vector induced by $S$, respectively.
To simplify the proofs, we introduce special notation for particular classes of games. 
The set of games where $G_S$ has multiple optimal strategies is denoted by $\mathcal{G}_S^{>1}$, i.e.,
\[
\mathcal{G}_S^{>1} = \{G\in\mathcal{G}\colon G_S \text{ has two distinct optimal strategies}\}
\]
Note that $\mathcal{G}_N^{>1}$ contains all games with two distinct optimal strategies. We write $\mathcal{G}^{>1}$ short for $\mathcal{G}_N^{>1}$.

A strategy is \emph{totally mixed} if all actions are played with strictly positive probability. The set of all games where $G_S$ has a totally mixed optimal strategy is denoted by $\mathcal{G}_S$, i.e.,
\[
\mathcal{G}_S = \{G\in\mathcal{G}\colon G_S \text{ has an optimal strategy $p$ with } p_+ = S\}
\]
Lastly, we define the set of all games that admit an optimal strategy with support $S$, i.e.,
\[
\mathcal{G}_S^\ast = \{G\in\mathcal{G}\colon G \text{ has an optimal strategy $p$ with } p_+ = S\}\text.
\]
Since every optimal strategy of the full game is also an optimal strategy of the subgame induced by its support, $\mathcal{G}_S^\ast$ is a subset of $\mathcal{G}_S$.

We assume that games are drawn from a probability distribution $\mathcal{X}$.
By $X$ we denote a random variable with distribution $\mathcal{X}$, i.e., $X\sim\mathcal{X}$.
For a set of games $\mathcal{G}'\subseteq\mathcal{G}$, let $P_X(\mathcal{G}')$ be the probability that a realization of $X$ is in $\mathcal{G}'$.
To establish our results, we require that $\mathcal{X}$ satisfies two regularity conditions.  
For $S\subseteq N$, we define the automorphism $\Phi_S$ on $\mathcal{G}$ such that, for all $i,j\in N$,
\[
	(\Phi_S(G))_{ij} = 
	\begin{cases}
		G_{ij} \quad&\text{if } i,j\in S \text{ or } i,j\in N\setminus S\text{, and}\\
		-G_{ij}\quad&\text{otherwise.}
	\end{cases}
\]
Then $\mathcal{X}$ is \emph{symmetric} if it is invariant under $\Phi_S$ for every $S\subseteq N$, i.e., $P_X(\mathcal{G}') = P_X(\Phi_S(\mathcal{G}'))$ for every $\mathcal{G}'\subseteq\mathcal{G}$.
Observe that, for all $S,T\subseteq N$, we have $\Phi_S\circ\Phi_T = \Phi_{S\mathrel{\Delta} T}$, where $\Delta$ is the symmetric difference of $S$ and $T$.
Furthermore, $\Phi_S = \Phi_{N\setminus S}$ for all $S\subseteq N$.
As a consequence, $(\{\Phi_S\colon S\subseteq N\},\circ)$ is a group with neutral element $\Phi_\emptyset$ such that every element is self-inverse.
The fact that $S\Delta T = T\Delta S$ implies that this group is abelian.
Moreover, we require $\mathcal{X}$ to be \emph{regular} in the sense that $X$ almost surely admits a unique optimal strategy or, formally, $P_X(\mathcal{G}^{>1}) = 0$.

\section{The Result}

The main result is obtained in \Cref{cor:main} and states the following: if games are drawn from a symmetric, regular probability distribution then, for every set of actions $S$, the probability that a symmetric zero-sum game admits an optimal strategy with support $S$ is $2^{-(n-1)}$ if $S$ has odd cardinality and $0$ if $S$ has even cardinality.
\Cref{cor:main} is an obvious consequence of \Cref{lem:subgame} and Propositions~\ref{thm:totallymixed} and~\ref{thm:extend}.
		The first lemma shows that every strategy that is the unique optimal of some game puts positive probability on an odd number of actions. This does not hold for non-symmetric zero-sum games. E.g., the game known as matching pennies has a unique optimal strategy of size $2$.
	
	\begin{lemma}\label{lem:odd}
		Let $G$ be a game and $p$ be the unique optimal strategy of $G$. Then the support of $p$ has odd cardinality.
	\end{lemma}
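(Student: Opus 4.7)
The plan is to argue by contradiction, assuming $|S|$ is even (where $S := p_+$) and producing a second optimal strategy of $G$. I would start by observing that, because $G$ is skew-symmetric, the value of $G$ is $0$ and $(p,p)$ is an equilibrium; uniqueness of $p$ makes $(p,p)$ the unique equilibrium, so the contrapositive of \Cref{lem:quasi-strict} gives that it is quasi-strict. Using $p^T G p = 0$ (automatic from skew-symmetry) together with $(Gp)_i \le 0$ for all $i$ (optimality of $p$ for the column player at value $0$), this forces $(Gp)_i = 0$ for $i \in S$ and $(Gp)_i < 0$ for $i \notin S$. In particular, $p_S$ is a nonzero vector in the kernel of the skew-symmetric submatrix $G_S$.

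Next I would invoke the classical fact that every real skew-symmetric matrix has even rank. Since $\ker G_S \neq 0$ and $|S|$ is assumed even, parity forces $\mathrm{rank}(G_S) \le |S|-2$, so $\dim \ker G_S \ge 2$. Pick any $r \in \ker G_S$ linearly independent from $p_S$ and consider the one-parameter family
\[
\hat q_t \;=\; \frac{p_S + t\, r}{1 + t \sum_{i\in S} r_i} \in \mathbb{R}^S.
\]
For $|t|$ sufficiently small, $\hat q_t$ has strictly positive entries, sums to $1$, still satisfies $G_S \hat q_t = 0$, and is distinct from $p_S$ whenever $t \neq 0$ (the latter because $r$ is not a scalar multiple of $p_S$).

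Finally, I would extend $\hat q_t$ to $\tilde q_t \in \Delta(N)$ by zeros off $S$ and verify that $\tilde q_t$ is optimal in the full game $G$: for $i \in S$, $(G \tilde q_t)_i = (G_S \hat q_t)_i = 0$, while for $i \notin S$, $(G\tilde q_t)_i$ depends continuously on $t$ and takes the value $(Gp)_i < 0$ at $t = 0$, so it remains strictly negative for all small $t$. Thus $G \tilde q_t \le 0$ coordinatewise and $\tilde q_t$ is optimal, yet distinct from $p$ for small $t \neq 0$, contradicting uniqueness. The delicate step is precisely this last continuity argument: it is the strict inequality $(Gp)_i < 0$ off $S$, obtained via \Cref{lem:quasi-strict}, that allows kernel perturbations of $p_S$ inside $G_S$ to be lifted to genuine optimal strategies of the full game. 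The remaining ingredients — the parity of the rank of a skew-symmetric matrix and the normalization of $\hat q_t$ — are routine.
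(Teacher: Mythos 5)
Your proof is correct and follows essentially the same route as the paper's: both arguments use \Cref{lem:quasi-strict} to obtain $(Gp)_i<0$ for $i\notin S$ and then perturb $p$ along an extra kernel direction supported in $S$, the strict inequalities off the support guaranteeing that the perturbed vector still satisfies $Gq\le 0$. The only cosmetic difference is that the paper produces the extra kernel vector by deleting one action from $S$ and invoking the singularity of odd-sized skew-symmetric matrices, whereas you apply the equivalent even-rank fact to $G_S$ itself to get $\dim\ker G_S\ge 2$.
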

	
	\begin{proof}
		Assume for contradiction that $p_+$ has even cardinality.
		Let $p_+ = S$.
		Since $p$ is the unique optimal strategy of $G$, it follows from \Cref{lem:quasi-strict} that $(Gp)_i < 0$ for all $i\not\in S$. 
		By definition of $S$, $|S\setminus\{i\}|$ is odd for $i\in S$.
		Hence, $G_{S\setminus\{i\}}$ does not have full rank, i.e., there is $v\in\mathbb{R}^n\setminus\{0\}$ with $v_-\cup v_+\subseteq S\setminus\{i\}$ and $G_{S\setminus\{i\}}v_{S\setminus\{i\}} = 0$.
		Assume without loss of generality that $(Gv)_i\le 0$ (otherwise we take $-v$).
		Then, for $\epsilon > 0$ small enough, we have that $p^\epsilon = (1-\epsilon)p+\epsilon v\ge 0$ and $Gp^\epsilon\le 0$, i.e., $\nicefrac{p^\epsilon}{|p^\epsilon|}$ is an optimal strategy of $G$.
		This contradicts uniqueness of $p$.
	\end{proof}

	Now we prove an equation that will be useful for the upcoming proofs. 
	
	\begin{lemma}\label{lem:equation}
		Let $G\in\mathcal{G}$, $v\in\mathbb{R}^n$, and $S\subseteq N$. Then,
		\[
			\Phi_S(G)\Phi_S(v) = \Phi_S(Gv)\text.
		\]
	\end{lemma}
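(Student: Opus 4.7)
The plan is to prove the identity by straightforward entrywise computation after fixing the convention for $\Phi_S$ acting on vectors. The natural convention, consistent with $\Phi_S$ on matrices, is $(\Phi_S(v))_i = v_i$ for $i\in S$ and $(\Phi_S(v))_i = -v_i$ otherwise; equivalently, letting $D_S$ denote the diagonal matrix with entry $+1$ on $S$ and $-1$ on $N\setminus S$, one has $\Phi_S(v) = D_S v$. Under the same convention for matrices, the definition of $\Phi_S$ given in the paper is exactly $\Phi_S(G) = D_S G D_S$, since multiplying $G$ by $D_S$ on the left flips the sign of row $i$ iff $i\notin S$, and multiplying by $D_S$ on the right flips the sign of column $j$ iff $j\notin S$, so the combined sign on $G_{ij}$ is $-1$ precisely when exactly one of $i,j$ lies in $S$. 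With this in hand, the identity reduces to $(D_S G D_S)(D_S v) = D_S(Gv)$, which is immediate from $D_S^2 = I$.

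To write the argument out in the paper's notation without introducing $D_S$, I would expand
\[
(\Phi_S(G)\Phi_S(v))_i = \sum_{j\in N}(\Phi_S(G))_{ij}\,(\Phi_S(v))_j
\]
and split the sum according to whether $j\in S$ or $j\notin S$. For fixed $i\in S$, in either case the two sign factors coincide (both $+1$ when $j\in S$, both $-1$ when $j\notin S$), so each summand equals $G_{ij}v_j$; hence the whole sum is $(Gv)_i = (\Phi_S(Gv))_i$. For $i\notin S$, the two sign factors differ in each case, so each summand equals $-G_{ij}v_j$, giving $-(Gv)_i = (\Phi_S(Gv))_i$. Combining the two cases yields the desired identity.

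There is no real obstacle here; the entire content is sign bookkeeping. The only point worth being explicit about is that the convention for $\Phi_S$ on vectors must match the one on matrices, because the identity is essentially the functoriality statement that $\Phi_S$ on matrices is conjugation by $D_S$ while $\Phi_S$ on vectors is left multiplication by $D_S$, and these intertwine correctly since $D_S$ is an involution.
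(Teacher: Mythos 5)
Your proof is correct and takes essentially the same route as the paper, which presents the identity as a $2\times 2$ block-matrix product---i.e., the same entrywise sign bookkeeping you carry out. One small remark: the paper's implicit convention for $\Phi_S$ on vectors is the opposite of yours (it negates the entries indexed by $S$ rather than those outside $S$), but since $\Phi_S(G)$ equals conjugation by $D_S$ and equally by $D_{N\setminus S}$, the identity holds under either convention; the paper's choice is the one that later yields $\Phi_S(v)\ge 0$ in the proof of Proposition~\ref{thm:totallymixed}.
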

	
\begin{proof}
	This is readily checked by verifying the following sequence of equalities:
		\vspace{1.4ex}
	\[
	\Phi_S(G)\Phi_S(v) =
	 \begin{pmatrix}[c:c]
	 	\bovermat{S}{G_{ij}} & \bovermat{N\setminus S}{-G_{ij}}\Bstrut\\
		\hdashline   
		-G_{ij} & G_{ij}\Tstrut
	\end{pmatrix}
	\cdot
	\begin{pmatrix}[c]
	 	-v_j\Bstrut\\
		\hdashline   
		v_j\Tstrut
	\end{pmatrix}
		 =
	\begin{pmatrix}[c]
		-(Gv)_i\Bstrut\\
		\hdashline   
		(Gv)_i\Tstrut
	 \end{pmatrix}
	 =
	\Phi_S(Gv)\text.
	\]
\end{proof}

	For regular distributions, it follows quickly from \Cref{lem:odd} that the probability that a game has an optimal strategy with even support size is $0$. If the distribution is also symmetric, it turns out that the probability that a game has an optimal strategy with given support of odd size is independent of the chosen support. This is again specific to symmetric zero-sum games and does not hold in general for zero-sum games.

\begin{proposition}\label{thm:totallymixed}
	Let $\mathcal{X}$ be symmetric and regular. Then the probability that $X$ has a totally mixed optimal strategy is
	\[
		\begin{cases}
			0\quad&\text{if $n$ is even, and}\\
			2^{-(n-1)}\quad&\text{if $n$ is odd.}
		\end{cases}
	\]
\end{proposition}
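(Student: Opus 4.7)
The proof splits by the parity of $n$. For even $n$, regularity gives a unique optimal strategy $p^*$ almost surely, and Lemma~\ref{lem:odd} forces $|p^*_+|$ to be odd and hence strictly less than $n$; so $p^*$ is never totally mixed and $P_X(\mathcal{G}_N) = 0$. For odd $n$, the plan is to exploit the abelian group of involutions $\{\Phi_S : S \subseteq N\}$, which has exactly $2^{n-1}$ distinct elements since $\Phi_S = \Phi_{N \setminus S}$. Symmetry of $\mathcal{X}$ gives $P_X(\Phi_S(\mathcal{G}_N)) = P_X(\mathcal{G}_N)$ for every $S$, so if the translates $\Phi_S(\mathcal{G}_N)$ partition $\mathcal{G}$ up to $\mathcal{X}$-null sets, summing measures yields $2^{n-1}\,P_X(\mathcal{G}_N) = 1$.

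Disjointness uses regularity. Suppose $G \in \Phi_S(\mathcal{G}_N) \cap \Phi_T(\mathcal{G}_N)$ with $\Phi_S \neq \Phi_T$; then $\Phi_S(G)$ and $\Phi_T(G)$ each admit a strictly positive optimal, $p$ and $q$ respectively, which necessarily lies in the kernel. Any perturbation of such a totally mixed optimal by a non-proportional kernel vector stays inside the open simplex and remains optimal after renormalising, so regularity forces $\dim\ker\Phi_S(G) = \dim\ker\Phi_T(G) = 1$. Lemma~\ref{lem:equation} then places $\Phi_S(p)$ and $\Phi_T(q)$ inside the one-dimensional $\ker G$, making them proportional; writing $R := S \mathrel{\Delta} T$, this gives $q \propto \Phi_R(p)$. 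Since $R$ is a proper non-empty subset of $N$, $\Phi_R(p)$ has entries of both signs, contradicting $q > 0$.

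The covering half is where the main difficulty lies. Since $\det G = 0$ always, $\ker G$ is nontrivial, and if $\ker G$ contains any vector $v$ with no zero entries, then Lemma~\ref{lem:equation} shows $\Phi_{v_-}(v) > 0$ lies in $\ker\Phi_{v_-}(G)$; after normalising, this is a totally mixed optimal of $\Phi_{v_-}(G)$, so $G \in \Phi_{v_-}(\mathcal{G}_N)$. What remains is to prove that, for each $i$, the exceptional set $E_i = \{G : \ker G \subseteq \{v_i = 0\}\}$ has $\mathcal{X}$-measure zero. Membership in $E_i$ implies $\mathrm{Pf}(G_{\hat\imath}) = 0$ and therefore $\dim\ker G_{\hat\imath} \ge 2$; I expect this extra subgame-kernel room, combined with the $\Phi_{\{i\}}$-symmetry of $\mathcal{X}$, to produce two distinct optimal strategies on any positive-mass portion of $E_i$, contradicting regularity.
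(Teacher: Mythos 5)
Your even case is exactly the paper's: \Cref{lem:odd} plus regularity. Your odd-case skeleton is also essentially the paper's argument in different clothing: the $2^{n-1}$ distinct translates $\Phi_S(\mathcal{G}_N)$, their equal measure by symmetry, and their a.s.\ pairwise disjointness (via the a.s.\ one-dimensionality of $\ker G$ and the observation that $\Phi_{S\mathrel{\Delta}T}(p)$ has entries of both signs) all check out; the paper indexes the same partition by the sign pattern of a null vector, setting $\mathcal{G}^=_S=\{G\colon \exists v\neq 0,\ Gv=0,\ v_+=N\setminus S\}$, which coincides with your $\Phi_S(\mathcal{G}_N)$ up to null sets.

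The genuine gap is in the covering step, precisely where you flag it. You reduce to showing $P_X(E_i)=0$ for $E_i=\{G\colon\ker G\subseteq\{v_i=0\}\}$, note that membership forces $\dim\ker G_{\hat\imath}\ge 2$, and then \emph{expect} that this, combined with $\Phi_{\{i\}}$-symmetry, produces two optimal strategies. As stated this does not close: a two-dimensional kernel of $G_{\hat\imath}$ need not contain any nonnegative vector, so it does not by itself yield even one optimal strategy of the subgame lying in that kernel, let alone two; and $\Phi_{\{i\}}$ fixes $G_{\hat\imath}$, so it cannot be the symmetry doing the work. The symmetry you actually need is $\Phi_{v_-}$ for the (game-dependent) sign pattern of the offending null vector---the same device you already deploy in your covering and disjointness steps. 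Concretely, and this is how the paper handles it: if $0\neq v\in\ker G$ has a zero entry at some coordinate $i$, then by \Cref{lem:equation} the vector $\Phi_{v_-}(v)\ge 0$ lies in $\ker\Phi_{v_-}(G)$ and still vanishes at $i$; its normalization is therefore an optimal strategy of $\Phi_{v_-}(G)$ that is not quasi-strict, so \Cref{lem:quasi-strict} gives $\Phi_{v_-}(G)\in\mathcal{G}^{>1}$. Partitioning the set $\{G\colon\ker G\text{ contains a nonzero vector with a zero entry}\}$ into the finitely many classes indexed by $S=v_-$ and applying symmetry and regularity to each class shows that this whole set is $\mathcal{X}$-null. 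Since it contains the complement of $\bigcup_S\Phi_S(\mathcal{G}_N)$ (because $\ker G\neq\{0\}$ for odd $n$, and any zero-free null vector would place $G$ in some translate), the covering follows; your detour through the $E_i$, the Pfaffian of $G_{\hat\imath}$, and the decomposition of the exceptional set into the $E_i$ is then unnecessary.
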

	
\begin{proof}
	First we consider the case when $n$ is even. Assume that a game $G\in\mathcal{G}$ has a totally mixed optimal strategy. It follows from \Cref{lem:odd} that $G$ has multiple optimal strategies. Thus, $\mathcal{G}_N\subseteq\mathcal{G}^{>1}$, which implies that $P_X(\mathcal{G}_N) = 0$.
	
	Now assume that $n$ is odd.
	For all $S\subseteq N$, let $\mathcal{G}_S^=$ be the set of games with a vector $v$ in the null space such that $v_+ = N\setminus S$.
	Note that $\mathcal{G}^=_\emptyset$ is the set of all games with a totally mixed optimal strategy.
	Every game is in $\mathcal{G}_S^=$ for some $S\subseteq N$, since a skew-symmetric matrix of odd size cannot have full rank (cf. Footnote~\ref{note:1}).
	For $S\subseteq N$, let $\mathcal{G}_S^0\subseteq\mathcal{G}_S^=$ be the set of games with a vector $v$ in the null space such that $v_+ = N\setminus S$ and $v_i = 0$ for some $i\in N$.
	It follows from \Cref{lem:equation} that $\Phi_S(G)\Phi_S(v) = 0$.
	Since $\Phi_S(v)\ge 0$ and $\Phi_S(v)_i = 0$, it follows from \Cref{lem:quasi-strict} that $\Phi_S(G)$ has multiple optimal strategies.
	Thus, $\Phi_S(\mathcal{G}_S^0)\subseteq\mathcal{G}^{>1}$.
	By symmetry of $\mathcal{X}$, we then have $P_X(\mathcal{G}_S^0) = P_X(\Phi_{S}(\mathcal{G}_S^0)) \le P_X(\mathcal{G}^{>1}) = 0$.
	Hence, vectors in the null space almost surely have no entries equal to $0$.
	This implies that $P_X(\mathcal{G}^=_S) = P_X(\Phi_{S\mathrel{\Delta}T}(\mathcal{G}^=_S)) = P_X(\mathcal{G}^=_T)$ for all $S,T\subseteq N$.
	Moreover, $\mathcal{G}^=_S$ and $\mathcal{G}^=_{N\setminus S}$ only differ by a null set, since $v_+ = N\setminus (-v)_+$ if $v$ has no zero entries.
	Hence $P_X(\mathcal{G}^=_S) = P_X(\mathcal{G}^=_{S}\cap \mathcal{G}^=_{N\setminus S})$ for all $S\subseteq N$.
	Now we show that $X$ almost surely has rank $n-1$.
	From before we know that $X$ has rank at most $n-1$.
	If $X$ has rank less than $n-1$, there are distinct $v,w\in\mathbb{R}^n$ such that $Xv = Xw = 0$.
	But then $\lambda v + (1-\lambda) w$ is in the null space of $X$ and has an entry equal to $0$ for some $\lambda\in\mathbb{R}$.
	This is a probability zero event as shown above.
	Hence, $X$ almost surely has rank $n-1$.
	This implies that $P_X(\mathcal{G}^=_S\cap\mathcal{G}^=_T) = 0$ for all $S,T\subseteq N$ with $S\neq T$ and $S\neq N\setminus T$.
	Together, we get $P_X(\mathcal{G}^=_S) = 2^{-(n-1)}$ for all $S\subseteq N$.
\end{proof}

It was already observed by \citet{Kapl45a} that a game of even size cannot have a unique, totally mixed optimal strategy, which follows from the fact that the rank of a skew-symmetric matrix is even.\footnote{The rank of a skew-symmetric matrix is even, since skew-symmetric matrices of odd size cannot have full rank (cf. Footnote~\ref{note:1}).} Moreover, \citet{Kapl95a} shows that a game admits a unique, totally mixed optimal strategy if and only if the principal Pfaffians of the corresponding matrix alternate in sign.\footnote{The $i$th principal Pfaffian is the Pfaffian of the matrix obtained by deleting the $i$th row and $i$th.} This result allows for a more algebraic but arguably less instructive proof of \Cref{thm:totallymixed}.

\begin{lemma}\label{lem:subgame}
	Let $S\subseteq N$. If $\mathcal{X}$ is symmetric and regular, then $\mathcal{X}_S$ is a symmetric and regular.
\end{lemma}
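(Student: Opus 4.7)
The plan has two independent parts. The \emph{symmetry} of $\mathcal{X}_S$ is essentially bookkeeping: for $T\subseteq S$, the full-game operator $\Phi_T$ commutes with the $S$-restriction in the sense that $(\Phi_T(G))_S$ equals the analogous sign-flip applied to $G_S$ (for $i,j\in S$ and $T\subseteq S$, both the full-game and the $S$-game versions of $\Phi_T$ flip $G_{ij}$ precisely when exactly one of $i,j$ lies in $T$). Thus for any event $\mathcal{H}$ on $S$-games, $\{G : G_S \in \Phi_T(\mathcal{H})\} = \Phi_T(\{G : G_S \in \mathcal{H}\})$, and invariance of $\mathcal{X}$ under $\Phi_T$ transfers verbatim to $\mathcal{X}_S$.

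For \emph{regularity} I want $P_X(\mathcal{G}_S^{>1}) = 0$. My plan is to prove the inclusion
\[
\mathcal{G}_S^{>1} \;\subseteq\; \bigcup_{T\subseteq N\setminus S} \Phi_T(\mathcal{G}^{>1}),
\]
after which symmetry of $\mathcal{X}$ gives $P_X(\Phi_T(\mathcal{G}^{>1})) = P_X(\mathcal{G}^{>1}) = 0$ for every $T$, and a union bound over the finitely many subsets $T\subseteq N\setminus S$ yields $P_X(\mathcal{G}_S^{>1}) = 0$.

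To establish the inclusion I would take $G\in\mathcal{G}_S^{>1}$. Since $G_S$ has multiple optima, \Cref{lem:quasi-strict} applied to $G_S$ produces an optimum $p$ of $G_S$ and an index $i\in S\setminus p_+$ with $(G_S p)_i = 0$. Extend $p$ to $\tilde p\in\Delta(N)$ by zeros on $N\setminus S$, and set $T := \{j\in N\setminus S : (G\tilde p)_j > 0\}$. Because $T\cap S = \emptyset$, the $S\times S$ block of $\Phi_T(G)$ is still $G_S$, so $(\Phi_T(G)\tilde p)_k = (G_S p)_k \le 0$ for every $k\in S$. For $j\in N\setminus S$, unfolding the definition of $\Phi_T$ (in the spirit of \Cref{lem:equation}) gives $(\Phi_T(G)\tilde p)_j = -(G\tilde p)_j$ when $j\in T$ and $(G\tilde p)_j$ otherwise, which is $\le 0$ in both cases by the choice of $T$. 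Hence $\tilde p$ is optimal in $\Phi_T(G)$, the index $i$ still lies outside $\tilde p_+$ and satisfies $(\Phi_T(G)\tilde p)_i = 0$, and a second application of \Cref{lem:quasi-strict} puts $\Phi_T(G)$ in $\mathcal{G}^{>1}$, so $G\in\Phi_T(\mathcal{G}^{>1})$.

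The main obstacle is identifying the correct family of sign-flips: one needs a subset $T$ that simultaneously leaves $G_S$ (and hence the optimality of $p$ on the $S$-block) untouched and forces each off-block coordinate $(G\tilde p)_j$ for $j\in N\setminus S$ to become non-positive. Restricting $T$ to $N\setminus S$ and choosing it coordinate-wise from the signs of $G\tilde p$ accomplishes both at once; after that the argument is routine.
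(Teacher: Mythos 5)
Your argument is correct in outline and takes a genuinely different route from the paper's for the regularity half (the symmetry half coincides with the paper's). The paper proves regularity of $\mathcal{X}_S$ by downward induction on $|S|$: it adjoins a single action $i$ to $S$, splits the offending event into $\mathcal{G}_{S^i}^{-}$ and $\mathcal{G}_{S^i}^{+}$ according to the sign of $(Gp)_i$, uses symmetry to conclude that the ``$\le 0$'' half has positive probability, and then exhibits a second optimal strategy of $G_{S\cup\{i\}}$ --- via \Cref{lem:quasi-strict} when $(Gp)_i=0$, or by perturbing $p$ toward a second optimal strategy $q$ of $G_S$ when $(Gp)_i<0$. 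Your covering $\mathcal{G}_S^{>1}\subseteq\bigcup_{T\subseteq N\setminus S}\Phi_T(\mathcal{G}^{>1})$ handles all coordinates of $N\setminus S$ in one shot and replaces the induction by a union bound over $2^{n-|S|}$ reflections, which is shorter and arguably cleaner; both proofs share the same key device of choosing $T\subseteq N\setminus S$ from the signs of $G\tilde p$ so that optimality on the $S$-block survives while the off-block coordinates become non-positive. One caveat: your first step invokes \Cref{lem:quasi-strict} to extract, from the assumption that $G_S$ has multiple optimal strategies, an optimal $p$ and an index $i\in S\setminus p_+$ with $(G_Sp)_i=0$. That is the \emph{converse} of \Cref{lem:quasi-strict}, not the lemma itself; the paper asserts this converse only as a ``well-known fact'' in the surrounding text (and leans on it again in the proof of \Cref{thm:almostsurely}), and it is not entirely free --- it needs the equalizer theorem or an extreme-point argument on the optimal-strategy polytope. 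If you prefer not to rely on it, your construction repairs itself: with $T=\{j\in N\setminus S\colon (G\tilde p)_j>0\}$, either some $j\in N\setminus S$ satisfies $(\Phi_T(G)\tilde p)_j=0$, in which case \Cref{lem:quasi-strict} applies with witness $j$, or $(\Phi_T(G)\tilde p)_j<0$ strictly for all $j\notin S$, in which case $(1-\lambda)\tilde p+\lambda\tilde q$ is a second optimal strategy of $\Phi_T(G)$ for small $\lambda>0$, where $q$ is a second optimal strategy of $G_S$ extended by zeros; either way $G\in\Phi_T(\mathcal{G}^{>1})$ and your union bound goes through.
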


\begin{proof}
	Let $S\subseteq N$ and $\mathcal{X}$ be symmetric and regular.
	First we show that $\mathcal{X}_S$ is symmetric.
	To this end, let $T\subseteq S$ and $\mathcal{G}_S'\subseteq\mathcal{G}_S$.
	Then,
	\begin{align*}
		P_{X_S}(\mathcal{G}_S') &= P_{X}(\{G\in\mathcal{G}\colon G_S\in\mathcal{G}_S'\}) = P_{X}(\Phi_T(\{G\in\mathcal{G}\colon G_S\in\mathcal{G}_S'\}))\\
		&= P_{X}(\{\Phi_T(G)\colon G\in\mathcal{G} \text{ and } G_S\in\mathcal{G}_S'\}) = P_{X}(\{G\in\mathcal{G}\colon \phi_T(G_S)\in\mathcal{G}_S'\})\\
		&= P_{X}(\{G\in\mathcal{G}\colon G_S\in\phi_T(\mathcal{G}_S')\}) = P_{X_S}(\phi_T(\mathcal{G}_S'))\text.
	\end{align*}
	The first and the last equality follow from the definition of $X_S$.
	The second equality holds by symmetry of $\mathcal{X}$.
	The third equality uses the definition of $\Phi_T$ as applied to sets of games.
	The forth equality holds since $\Phi_T$ is self-inverse and since $\Phi_T$ commutes with restriction to $S$.
	Lastly, the fifth equality again holds since $\Phi_T$ is self-inverse.
	
	Now we show by induction over $|S|$ that $\mathcal{X}_S$ is regular.
	If $S = N$ the statement is clear by the hypothesis of the lemma.
	For the induction step, let $S\subsetneq N$ and assume that $\mathcal{X}_T$ is regular for all $T\subseteq N$ with $|T| > |S|$.
	Assume for contradiction that $\mathcal{X}_S$ is not regular, i.e., $P_{X_S}(\mathcal{G}_S^{>1}) > 0$.
	Let $i\in N\setminus S$ and $S^{i} = S\cup\{i\}$.
	Then, we have that $P_{X_{S^i}}(\{G_{S^i}\in\mathcal{G}_{S^i}\colon G_S\in\mathcal{G}_S^{>1}\}) = P_{X_S}(\mathcal{G}_S^{>1}) > 0$.
	We define
	\[
		\mathcal{G}_{S^i}^{-} = \{G\in\mathcal{G}_{S^i}\colon p\in\Delta(S^i) \text{ with } p_+ = S, G_Sp_S\le 0 \text{ and, } (Gp)_i\le 0 \text{ for some } p\in\Delta(S^i)\}\text,
	\]
	with $\mathcal{G}_{S^i}^{+}$ defined by replacing the last $\le$ by $\ge$.
	Since $\mathcal{X}_{T}$ is symmetric for every $T\subseteq N$, it follows that $P_{X_{S^i}}(\mathcal{G}_{S^i}^{-}) = P_{X_{S^i}}(\mathcal{G}_{S^i}^{+})$.
	Moreover, $\mathcal{G}_{S^i}^{-}\cup\mathcal{G}_{S^i}^{+} = \{G_{S^i}\in\mathcal{G}_{S^i}\colon G_S\in\mathcal{G}_S^{>1}\}$ and, hence, $P_{X_{S^i}}(\mathcal{G}_{S^i}^{-})> 0$.
	Now let $G\in\mathcal{G}_{S^i}^{-}$.
	If there is $p\in\Delta(S^i)$ such that $G_Sp_S\le 0$ and $(Gp)_i = 0$, then if follows from \Cref{lem:quasi-strict} that $G$ has multiple optimal strategies.
	If $(Gp)_i < 0$, there is $q\in\Delta(S_i)$ such that $q_S\neq p_S$ and $G_Sq_S\le 0$.
	Such a $q$ exists since $G_S\in\mathcal{G}_S^{>1}$ by definition.
	But then $(1-\lambda)p_{S^i} + \lambda q_{S^i}$ is another optimal strategy of $G$ for small $\lambda > 0$.
	In any case, $G$ has two distinct optimal strategies.
	Thus, we have
	\[
		P_{X_{S^i}}(\mathcal{G}_{S^i}^{>1})\ge P_{X_{S^i}}(\mathcal{G}_{S^i}^{-}) > 0\text,
	\]
	which contradicts the induction hypothesis that $\mathcal{X}_{S^i}$ is regular.
\end{proof}

By combining the last two statements we get the probability that $X_S$ admits a totally mixed optimal strategy. In the next proposition we determine the probability that $X$ has an optimal strategy with support $S$ given that $X_S$ has a totally mixed optimal strategy.

\begin{proposition}\label{thm:extend}
	Let $\mathcal{X}$ be symmetric and regular and $S\subseteq N$. Then $P_X(\mathcal{G}_S^\ast\,|\, \mathcal{G}_S) = 2^{-(n-|S|)}$.
\end{proposition}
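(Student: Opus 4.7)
The plan is to condition on $G \in \mathcal{G}_S$, where by \Cref{lem:subgame} (applied to $S$) the subgame $G_S$ almost surely has a unique totally mixed optimal strategy $p\in\Delta(S)$. Because $G_S$ is skew-symmetric and $p$ is totally mixed optimal, $p^T G_S p = 0$ together with $G_S p \le 0$ forces $G_S p = 0$. Let $\tilde p\in\Delta(N)$ be the extension of $p$ by zeros on $N\setminus S$. Then $\tilde p$ is optimal in $G$ if and only if $(G\tilde p)_i\le 0$ for all $i\in N\setminus S$, and in this case $G\in\mathcal{G}_S^\ast$. Conversely, any optimal strategy of $G$ with support $S$ restricts to a totally mixed optimal of $G_S$ and therefore coincides with $\tilde p$ outside a null set, so $\mathcal{G}_S^\ast\cap\mathcal{G}_S=\{G\in\mathcal{G}_S\colon (G\tilde p)_i\le 0\text{ for all }i\in N\setminus S\}$ up to a null set.

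First I would rule out the boundary, i.e., show that almost surely on $\mathcal{G}_S$ one has $(G\tilde p)_i\ne 0$ for every $i\in N\setminus S$. Suppose $(G\tilde p)_i=0$, and write $S^i=S\cup\{i\}$. Then the restriction of $\tilde p$ to $S^i$ lies in $\Delta(S^i)$ and is optimal in $G_{S^i}$ (since $G_{S^i}$ applied to it vanishes on $S$ and on $\{i\}$), yet the pure action $i$ achieves the value $0$ while not lying in its support. \Cref{lem:quasi-strict} then yields $G_{S^i}\in\mathcal{G}_{S^i}^{>1}$, which by \Cref{lem:subgame} applied to $S^i$ is a null event.

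Next I would exploit the symmetry group. For each $T\subseteq N\setminus S$ let $\mathcal{G}^T\subseteq\mathcal{G}_S$ be the set of games for which $(G\tilde p)_i>0$ exactly when $i\in T$. By the previous step the sets $\{\mathcal{G}^T\}_{T\subseteq N\setminus S}$ partition $\mathcal{G}_S$ up to a null set, and $\mathcal{G}_S^\ast\cap\mathcal{G}_S=\mathcal{G}^\emptyset$ up to the same null set. Since $T\subseteq N\setminus S$, the automorphism $\Phi_T$ leaves the block $G_S$ untouched, so $p$ remains the unique totally mixed optimal of $\Phi_T(G)_S$; and because $\tilde p$ vanishes on $T$, $\Phi_T(\tilde p)=\tilde p$. \Cref{lem:equation} then gives $\Phi_T(G)\tilde p=\Phi_T(G)\Phi_T(\tilde p)=\Phi_T(G\tilde p)$, which negates exactly the coordinates in $T$. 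Hence $\Phi_T$ bijects $\mathcal{G}^\emptyset$ onto $\mathcal{G}^T$, and symmetry of $\mathcal{X}$ gives $P_X(\mathcal{G}^T)=P_X(\mathcal{G}^\emptyset)$ for every $T$. Summing over the $2^{n-|S|}$ subsets yields $P_X(\mathcal{G}_S^\ast\cap\mathcal{G}_S)=2^{-(n-|S|)}P_X(\mathcal{G}_S)$, which is the desired identity. The trivial case $S=N$ requires no argument, since $n-|S|=0$ and $\mathcal{G}_N^\ast=\mathcal{G}_N$.

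The main obstacle is the non-degeneracy step showing $(G\tilde p)_i\ne 0$ almost surely for $i\in N\setminus S$: only then does the partition into $2^{n-|S|}$ sign patterns exhaust $\mathcal{G}_S$ up to a null set, and this is exactly where the full strength of \Cref{lem:subgame} is needed, namely regularity on the one-step-larger subgames $G_{S^i}$ rather than just on $G_S$ itself.
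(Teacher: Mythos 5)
Your proposal is correct and follows essentially the same route as the paper: partition $\mathcal{G}_S$ by the sign pattern $T\subseteq N\setminus S$ of $G\tilde p$ outside $S$, use \Cref{lem:equation} and the symmetry of $\mathcal{X}$ under $\Phi_T$ to show all $2^{n-|S|}$ pieces have equal probability, and use regularity (via \Cref{lem:quasi-strict,lem:subgame}) to dispose of the zero-payoff boundary and of overlaps. The only (harmless) variation is that you apply \Cref{lem:quasi-strict} to the augmented subgame $G_{S^i}$ to rule out $(G\tilde p)_i=0$ for \emph{all} $G\in\mathcal{G}_S$, whereas the paper applies it to the full game $G$ for games already in $\mathcal{G}_S^\ast$.
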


\begin{proof}
	Let $\mathcal{X}$ be symmetric and regular and $S\subseteq N$.
	Recall that $\mathcal{G}_S$ is the set of all games where $G_S$ has a totally mixed optimal strategy.
	Moreover, we define $\mathcal{G}_S(T)$ to be the set of all games where $G_S$ has a totally mixed optimal strategy such that the set of actions yielding positive payoff corresponds exactly to the rows in $T$, i.e.,
	\begin{align*}
		\mathcal{G}_S(T) = \{&G\in\mathcal{G}\colon p_S \text{ is a totally mixed optimal strategy of } G_S \text{ and } (Gp)_+ = T \\
		&\text{for some } p\in\Delta(N)\}\text.
	\end{align*}
	Note that $\mathcal{G}_S(\emptyset) = \mathcal{G}_S^\ast$ and $\mathcal{G}_S(T)$ is non-empty only if $T\subseteq N\setminus S$.
	It follows from \Cref{lem:equation} that $\Phi_T(\mathcal{G}_S(T))\subseteq\mathcal{G}_S^*$ for all $T\subseteq N\setminus S$.
	For $G\in\mathcal{G}_S^*\setminus \Phi_T(\mathcal{G}_S(T))$ we have that $(Gp)_i = 0$ for some $i\in T$.
	Then it follows from \Cref{lem:quasi-strict} that $G$ has multiple optimal strategies.
	Thus, by symmetry of $\mathcal{X}$, we have $P_X(\mathcal{G}_S(T)) = P_X(\Phi_T(\mathcal{G}_S(T))) = P_X(\mathcal{G}_S^\ast)$ for all $T\subseteq N\setminus S$.
	Since, by \Cref{lem:subgame}, $X_S$ almost surely has a unique optimal strategy, we also have that $P_X(\mathcal{G}_S(T)\cap \mathcal{G}_S(T')) = 0$ for all distinct $T,T'\subseteq N\setminus S$.
	Since $N\setminus S$ has $2^{n-|S|}$ distinct subsets, we get $P_X(\mathcal{G}_S^\ast\,|\, \mathcal{G}_S) = 2^{-(n-|S|)}$.
\end{proof}

The main result easily follows from \Cref{lem:subgame} and \Cref{thm:totallymixed,thm:extend}.

\begin{theorem}\label{cor:main}
	Let $\mathcal{X}$ be symmetric and regular. Then, for every $S\subseteq N$, the probability that $X$ has an optimal strategy with support $S$ is
	\[
	\begin{cases}
		0 \quad&\text{if } |S| \text{ is even, and}\\
		2^{-(n-1)} \quad &\text{if } |S| \text{ is odd.}
	\end{cases}
	\]
\end{theorem}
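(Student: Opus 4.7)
The plan is to assemble the theorem directly from the three ingredients already prepared in the paper: \Cref{lem:odd}, \Cref{thm:totallymixed}, \Cref{lem:subgame}, and \Cref{thm:extend}. Almost no new work is needed; the main task is to correctly identify which auxiliary distribution to feed into which statement.

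First I would dispose of the even case. Since $\mathcal{X}$ is regular, $X$ almost surely admits a unique optimal strategy; by \Cref{lem:odd}, the support of that strategy has odd cardinality almost surely. Hence for any $S\subseteq N$ of even size, $P_X(\mathcal{G}_S^\ast)=0$.

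For the odd case, the idea is to write $P_X(\mathcal{G}_S^\ast)=P_X(\mathcal{G}_S)\cdot P_X(\mathcal{G}_S^\ast\mid\mathcal{G}_S)$ and compute each factor. By \Cref{lem:subgame}, the induced distribution $\mathcal{X}_S$ on the $|S|\times|S|$ subgames is symmetric and regular. Since $|S|$ is odd, \Cref{thm:totallymixed} applied to $\mathcal{X}_S$ gives $P_{X_S}(\mathcal{G}_S\text{ has totally mixed optimum})=2^{-(|S|-1)}$, i.e., $P_X(\mathcal{G}_S)=2^{-(|S|-1)}$. \Cref{thm:extend} supplies the conditional probability $P_X(\mathcal{G}_S^\ast\mid\mathcal{G}_S)=2^{-(n-|S|)}$. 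Multiplying gives
\[
P_X(\mathcal{G}_S^\ast)\;=\;2^{-(|S|-1)}\cdot 2^{-(n-|S|)}\;=\;2^{-(n-1)}.
\]

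There is essentially no obstacle in this step; the entire difficulty of the theorem has been absorbed into \Cref{thm:totallymixed}, \Cref{lem:subgame}, and \Cref{thm:extend}. The only thing to be careful about is that \Cref{thm:totallymixed} is invoked for the distribution $\mathcal{X}_S$ on games of size $|S|$, not for the original distribution $\mathcal{X}$ on games of size $n$, so that the exponent is $|S|-1$ rather than $n-1$ at that stage; the missing factor of $2^{-(n-|S|)}$ is exactly what \Cref{thm:extend} provides.
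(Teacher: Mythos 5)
Your proof is correct and is exactly the argument the paper intends (the paper itself only remarks that the theorem ``easily follows'' from \Cref{lem:subgame} and \Cref{thm:totallymixed,thm:extend}): the decomposition $P_X(\mathcal{G}_S^\ast)=P_X(\mathcal{G}_S)\cdot P_X(\mathcal{G}_S^\ast\mid\mathcal{G}_S)$, with \Cref{thm:totallymixed} applied to the induced distribution $\mathcal{X}_S$ so that the exponent is $|S|-1$, is precisely the intended assembly. Your handling of the even case via \Cref{lem:odd} and regularity is also fine.
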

Observe that $N$ has $2^{n-1}$ subsets of odd size. Hence, the probabilities above sum to $1$.

We show that the assumptions in \Cref{cor:main} are indeed necessary.
It is easily seen that the assumptions that $\mathcal X$ is symmetric and regular are necessary.
The conclusion of \Cref{cor:main} does not hold for the distribution that returns the game with all payoffs equal to $0$ with probability one, even though it is symmetric.
Neither does \Cref{cor:main} hold for the distribution that returns the classic game of ``rock, paper, scissors'' with probability one, even though it is regular.
\Cref{cor:main} may also fail if (not necessarily symmetric) zero-sum games are sampled from a symmetric and regular distribution. If $|M| = |N| = 2$ and each entry in the payoff matrix is drawn from a standard normal distribution, the probability that an optimal strategy of the row player in the resulting zero-sum game has full support is one third.

	Lastly, we show that a distribution is regular if it is absolutely continuous (w.r.t. the Lebesgue measure). In particular, a distribution is absolutely continuous if all payoffs are independent, absolutely continuous random variables. This implies that, if payoffs are drawn from independent, absolutely continuous distributions that are symmetric about $0$, e.g., normal distributions or uniform distributions on intervals that are symmetric about $0$, then the induced distribution on the set of games is regular and symmetric.

	\begin{proposition}\label{thm:almostsurely}
		If $\mathcal{X}$ is absolutely continuous, then it is regular.
	\end{proposition}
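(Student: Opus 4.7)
The plan is to show that $\mathcal{G}^{>1}$ is a Lebesgue-null subset of the space of $n\times n$ skew-symmetric matrices (identified with $\mathbb{R}^{n(n-1)/2}$ via its strictly upper-triangular entries). Absolute continuity of $\mathcal{X}$ with respect to this Lebesgue measure then immediately yields $P_X(\mathcal{G}^{>1})=0$, which is regularity.

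First I would decompose $\mathcal{G}^{>1}$ according to the supports of two distinct optimal strategies $p\neq p'$ of $G$. If $p_+\neq p'_+$, then after swapping $p$ and $p'$ if necessary I may pick $i\in p'_+\setminus p_+$; the identity $(p')^T G p=0$ (the value of a symmetric zero-sum game is zero), combined with $(Gp)_j\le 0$ for all $j$ and $p'_j>0$ on $p'_+$, forces $(Gp)_i=0$. Writing $S:=p_+$, this says the positive vector $p_S$ lies in the kernel of the $(|S|+1)\times|S|$ submatrix $G_{S\cup\{i\},S}$, so this matrix has rank less than $|S|$. If instead $p_+=p'_+=:S$, then $p_S-p'_S$ is a nonzero element of $\ker G_S$ with zero coordinate sum, equivalently the $(|S|+1)\times|S|$ augmented matrix $\begin{pmatrix} G_S \\ 1^T \end{pmatrix}$ (with $1^T$ the all-ones row) has rank less than $|S|$. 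Hence $\mathcal{G}^{>1}$ is contained in a finite union, over appropriate $(S,i)$, of sets of the form ``a specified $(|S|+1)\times|S|$ submatrix derived from $G$ is rank-deficient''.

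Each such set is cut out by the vanishing of all $|S|\times|S|$ minors of the matrix in question, a system of polynomial equations in the entries of $G$. To conclude the set is Lebesgue-null it suffices to exhibit a single skew-symmetric $G$ for which the matrix has full column rank. The main obstacle is odd $|S|$: then $G_S$ is automatically singular (cf.\ footnote~\ref{note:1}), so one cannot simply make $G_S$ invertible. For $G_{S\cup\{i\},S}$ with $|S|$ odd I would take $G_S$ of maximal rank $|S|-1$ with kernel spanned by some $v$, and choose the extra row $(G_{ij})_{j\in S}$ so that $\sum_j G_{ij}v_j\neq 0$; these entries are free parameters independent of $G_S$ in the skew-symmetric parameterization, so this is achievable. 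For $\begin{pmatrix} G_S \\ 1^T \end{pmatrix}$ the analogous requirement is $\sum_j v_j\neq 0$, easily exhibited by letting $G_S$ be such that $\ker G_S$ is spanned by a standard basis vector. The even-$|S|$ cases are immediate by taking $G_S$ of full rank.

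Consequently every piece in the decomposition is a proper algebraic subset of the space of skew-symmetric matrices and hence Lebesgue-null; a finite union of null sets remains null. Absolute continuity of $\mathcal{X}$ then gives $P_X(\mathcal{G}^{>1})=0$.
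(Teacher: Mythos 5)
Your proof is correct, but it follows a genuinely different route from the paper's. The paper first invokes the fact (via \Cref{lem:quasi-strict}, which rests on the equalizer theorem) that a game with multiple optimal strategies has an optimal strategy $p$ that is not quasi-strict, and from $G_Sp_S=0$ and $G_{S\cup\{i\}}p_{S\cup\{i\}}=0$ concludes that some \emph{even-sized principal} submatrix is singular; genericity is then immediate, since the determinant of an even-sized skew-symmetric matrix is the square of the Pfaffian and hence a non-trivial polynomial. You instead bypass quasi-strictness entirely and argue directly from two distinct optimal strategies $p\neq p'$, splitting on whether their supports coincide; this leads to rank-deficiency of \emph{rectangular} $(|S|+1)\times|S|$ matrices (one augmented by a payoff row, one by the all-ones row), and you must then exhibit explicit witnesses showing the corresponding minor equations are not identically zero, with the odd-$|S|$ case requiring the extra care you supply (kernel spanned by a standard basis vector, a free off-block entry pairing non-trivially with it). The derivations of the key identities — $(p')^TGp=0$ forcing $(Gp)_i=0$ on $p'_+$, and $\mathbf{1}^T(p_S-p'_S)=0$ — are sound, as are the genericity witnesses. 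What your approach buys is self-containedness: it does not rely on \Cref{thm:equalizer} or \Cref{lem:quasi-strict}. What the paper's approach buys is brevity: by reducing to square principal submatrices of even size, the non-triviality of the relevant polynomial is a one-line observation rather than a case analysis.
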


	\begin{proof}
		Let $\mathcal{X}$ be absolutely continuous.
		Moreover, let $G\in\mathcal{G}^{>1}$.
		We will show that $G$ has a singular square sub-matrix of even size.
		As discussed before, every game that has multiple optimal strategies admits an optimal strategy that is not quasi-strict.
		Let $p$ be an optimal strategy of $G$ that is not quasi-strict, i.e., there is $i\not\in p_+$ such that $(Gp)_i = 0$.
		Let $p_+ = S$.
		Then, $G_{S}p_{S} = 0$ and $G_{S\cup\{i\}}p_{S\cup\{i\}} = 0$.
		So either $G_{S}$ or $G_{S\cup\{i\}}$ is a singular square sub-matrix of even size.

		Since $\mathcal{X}$ is absolutely continuous, every even-sized square sub-matrix of $X$ is almost surely regular.
		Thus, $X$ almost surely admits a unique optimal strategy, i.e., $\mathcal{X}$ is regular.
	\end{proof}

\section*{Acknowledgements}
This material is based upon work supported by Deutsche Forschungsgemeinschaft under grant {BR~2312/10-1} and TUM Institute for Advanced Study through a Hans Fischer Senior Fellowship.
An earlier version of this paper has been presented at the 5th World Congress of the Game Theory Society in Maastricht (July 2016).
The author would like to thank Felix Brandt, Fedor Sandomirskiy, the anonymous advisory editor, and the anonymous reviewer for helpful comments.


\begin{thebibliography}{18}
\providecommand{\natexlab}[1]{#1}
\providecommand{\url}[1]{\texttt{#1}}
\expandafter\ifx\csname urlstyle\endcsname\relax
  \providecommand{\doi}[1]{doi: #1}\else
  \providecommand{\doi}{doi: \begingroup \urlstyle{rm}\Url}\fi

\bibitem[Aumann(1959)]{Auma59a}
R.~J. Aumann.
\newblock Acceptable points in general cooperative n-person games.
\newblock In A.~W. Tucker and R.~D. Luce, editors, \emph{Contributions to the
  Theory of Games IV}, volume~40 of \emph{Annals of Mathematics Studies}, pages
  287--324. Princeton University Press, 1959.

\bibitem[Faris and Maier(1987)]{FaMa87a}
W.~G. Faris and R.~S. Maier.
\newblock The value of a random game: The advantage of rationality.
\newblock \emph{Complex Systems}, 1:\penalty0 235--244, 1987.

\bibitem[Fisher and Reeves(1995)]{FiRe95a}
D.~C. Fisher and R.~B. Reeves.
\newblock Optimal strategies for random tournament games.
\newblock \emph{Linear Algebra and its Applications}, 217:\penalty0 83--85,
  1995.

\bibitem[Fisher and Ryan(1992)]{FiRy92a}
D.~C. Fisher and J.~Ryan.
\newblock Optimal strategies for a generalized ``scissors, paper, and stone''
  game.
\newblock \emph{American Mathematical Monthly}, 99\penalty0 (10):\penalty0
  935--942, 1992.

\bibitem[Harsanyi(1973{\natexlab{a}})]{Hars73a}
J.~C. Harsanyi.
\newblock Oddness of the number of equilibrium points: A new proof.
\newblock \emph{International Journal of Game Theory}, 2\penalty0 (1):\penalty0
  235--250, 1973{\natexlab{a}}.

\bibitem[Harsanyi(1973{\natexlab{b}})]{Hars73b}
J.~C. Harsanyi.
\newblock Games with randomly disturbed payoffs: A new rationale for
  mixed-strategy equilibrium points.
\newblock \emph{International Journal of Game Theory}, 2\penalty0 (1):\penalty0
  1--23, 1973{\natexlab{b}}.

\bibitem[Jonasson(2004)]{Jona04a}
J.~Jonasson.
\newblock On the optimal strategy in a random game.
\newblock \emph{Electronic Communications in Probability}, 9:\penalty0
  132--139, 2004.

\bibitem[Kaplansky(1945)]{Kapl45a}
I.~Kaplansky.
\newblock A contribution to von {N}eumann's theory of games.
\newblock \emph{Annals of Mathematics}, 46\penalty0 (3):\penalty0 474--479,
  1945.

\bibitem[Kaplansky(1995)]{Kapl95a}
I.~Kaplansky.
\newblock A contribution to von {N}eumann's theory of games. {II}.
\newblock \emph{Linear Algebra and its Applications}, 226-228:\penalty0
  371--373, 1995.

\bibitem[Knebel et~al.(2015)Knebel, Weber, Kr\"uger, and Frey]{KWKF15a}
J.~Knebel, M.~F. Weber, T.~Kr\"uger, and E.~Frey.
\newblock Evolutionary games of condensates in coupled birth--death processes.
\newblock \emph{Nature Communications}, 6\penalty0 (6977), 2015.

\bibitem[Laffond et~al.(1997)Laffond, Laslier, and {Le Breton}]{LLL97a}
G.~Laffond, J.-F. Laslier, and M.~{Le Breton}.
\newblock A theorem on symmetric two-player zero-sum games.
\newblock \emph{Journal of Economic Theory}, 72\penalty0 (2):\penalty0
  426--431, 1997.

\bibitem[{Le Breton}(2005)]{LeBr05a}
M.~{Le Breton}.
\newblock On the uniqueness of equilibrium in symmetric two-player zero-sum
  games with integer payoffs.
\newblock \emph{{\'E}conomie publique}, 17\penalty0 (2):\penalty0 187--195,
  2005.

\bibitem[McLennan(2005)]{McLe05a}
A.~McLennan.
\newblock The expected number of {N}ash equilibria of a normal form game.
\newblock \emph{Econometrica}, 73\penalty0 (1):\penalty0 141--174, 2005.

\bibitem[McLennan and Berg(2005)]{McBe05a}
A.~McLennan and J.~Berg.
\newblock Asymptotic expected number of {N}ash equilibria of two-player normal
  form games.
\newblock \emph{Games and Economic Behavior}, 51\penalty0 (2):\penalty0
  264--295, 2005.

\bibitem[Raghavan(1994)]{Ragh94a}
T.~E.~S. Raghavan.
\newblock Zero-sum two person games.
\newblock In R.~J. Aumann and S.~Hart, editors, \emph{Handbook of Game Theory
  with Economic Applications}, chapter~20. North-Holland, 1994.

\bibitem[Roberts(2004)]{Robe04a}
D.~P. Roberts.
\newblock Kernel sizes for random matrix games.
\newblock Unpublished manuscript, 2004.

\bibitem[Roberts(2006)]{Robe06a}
D.~P. Roberts.
\newblock {N}ash equilibria of {C}auchy-random zero-sum and coordination matrix
  games.
\newblock \emph{International Journal of Game Theory}, 34\penalty0
  (2):\penalty0 167--184, 2006.

\bibitem[Wilson(1971)]{Wils71a}
R.~Wilson.
\newblock Computing equilibria of {N}-person games.
\newblock \emph{SIAM Journal on Applied Mathematics}, 21\penalty0 (1):\penalty0
  80--87, 1971.

\end{thebibliography}
\end{document}